\newcommand{\matW}{\mathbf{W}}
\newcommand{\matV}{\mathbf{V}}
\newcommand{\matA}{\mathbf{A}}
\newcommand{\matI}{\mathbf{I}}
\newcommand{\matC}{\mathbf{C}}
\newcommand{\matB}{\mathbf{B}}
\newcommand{\matS}{\mathbf{S}}
\newcommand{\matD}{\mathbf{D}}
\newcommand{\matP}{\mathbf{P}}
\newcommand{\vecx}{\mathbf{x}}
\newcommand{\vecy}{\mathbf{y}}
\newcommand{\veca}{\mathbf{a}}
\newcommand{\vecr}{\mathbf{r}}
\newtheorem{prop}{Proposition}
\newtheorem{definition}{Definition}
\newtheorem{remark}{Remark}
\DeclareMathOperator*{\argmin}{arg\,min}
\DeclareMathOperator*{\argmax}{arg\,max}
\def\BibTeX{{\rm B\kern-.05em{\sc i\kern-.025em b}\kern-.08em
    T\kern-.1667em\lower.7ex\hbox{E}\kern-.125emX}}
\begin{document}

\title{Path-Aware OMP Algorithms for Provenance Recovery in Wireless Networks}

\author{\IEEEauthorblockN{Shilpi Mishra$^{\dagger}$, J. Harshan$^{\dagger}$ and Ranjitha Prasad$^{*}$}
\IEEEauthorblockA{$^{\dagger}$Bharti School of Telecommunication Technology and Management, IIT Delhi, India\\
$^{*}$Department of Electronics and Communication Engineering, IIIT Delhi, India
}
}

\maketitle

\begin{abstract}
Low-latency provenance embedding methods have received traction in wireless networks for their ability to track the footprint of information flow. One such known method is based on Bloom filters wherein the nodes that forward the packets appropriately choose a certain number of hash functions to  embed their signatures in a shared space in the packet. Although Bloom filter methods can achieve the required accuracy level in provenance recovery, they are known to incur higher processing delay since higher number of hash functions are needed to meet the accuracy level. Motivated by this behaviour, we identify a regime of delay-constraints within which new provenance embedding methods must be proposed as Bloom filter methods are no longer applicable. To fill this research gap, we present network-coded edge embedding (NCEE) protocols that facilitate low-latency routing of packets in wireless network applications. First, we show that the problem of designing provenance recovery methods for the NCEE protocol is equivalent to the celebrated problem of compressed sensing, however, with additional constraints of path formation on the solution. Subsequently, we present a family of path-aware orthogonal matching pursuit algorithms that jointly incorporates the sparsity and path constraints. Through extensive simulation results, we show that our algorithms enjoy low-complexity implementation, and also improve the path recovery performance when compared to path-agnostic counterparts.
\end{abstract}

\begin{IEEEkeywords}
Provenance, sparse recovery, OMP, low-latency.
\end{IEEEkeywords}

\section{Introduction}
\label{sec:intro}
Wireless multi-hop networks have found extensive applications in scenarios where direct communication between a source and a destination is limited due to multi-path fading, short-range coverage area, etc. Traditionally, multi-hop networks have been extensively studied from the perspective of optimizing end-to-end reliability \cite{chen2018multiple}. However, concurrent efforts on designing these networks from the viewpoint of security threats are also known  \cite{shi2004designing}. Within the latter class of efforts, it is well-known that multi-hop networks must be designed to track the footprint of information flow as it helps in network diagnostics as well as in detecting security threats on the packets. Specifically, in this context, the footprint of information flow refers to the knowledge of various processes that modify the packet en-route to the destination \cite{keller2012your}. Formally, the information related to the identity and the order of the nodes that modify and transmit the packets is called the \emph{provenance}, the method in which this information is embedded in the packet is called \emph{provenance embedding}, and finally, the process with which the provenance information is recovered at the destination is called \emph{provenance recovery} \cite{lim2010provenance}. 

\begin{figure}
\begin{center}
\includegraphics[scale=0.6]{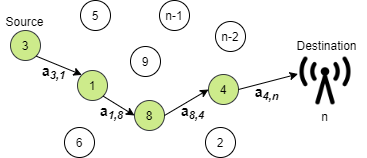}
\vspace{-0.4cm}
\end{center}
\caption{Multi-hop network model consisting of a source that communicates data to the destination node through intermediate relay nodes. The nodes use the network-coded edge embedding protocol to assist the destination in recovering the provenance with low-latency requirements.}
\label{fig:6}
\end{figure}

In the context of vehicular networks \cite{SVN}, which is a use-case of multi-hop wireless networks, a key requirement for designing provenance embedding methods is the end-to-end latency constraint imposed by the underlying application \cite{she2019ultra}. Although the process of decoding and forwarding the packets result in additional overhead, the process of provenance embedding also consumes time, thereby adding substantial end-to-end delay. As a result, low-complexity provenance embedding methods must be designed so that negligible delay is incurred on the packets. While low-complexity provenance embedding is important to meet the delay constraints during the packet routing phase, low-complexity provenance recovery is equally important. For instance, the destination could be a road side unit (RSU), and it may have limited resources to recover the provenance information from the packet. Thus, one needs to jointly design provenance embedding and recovery methods so that the required accuracy and latency constraints are met.

\begin{figure}
\begin{center}
\includegraphics[scale=0.41]{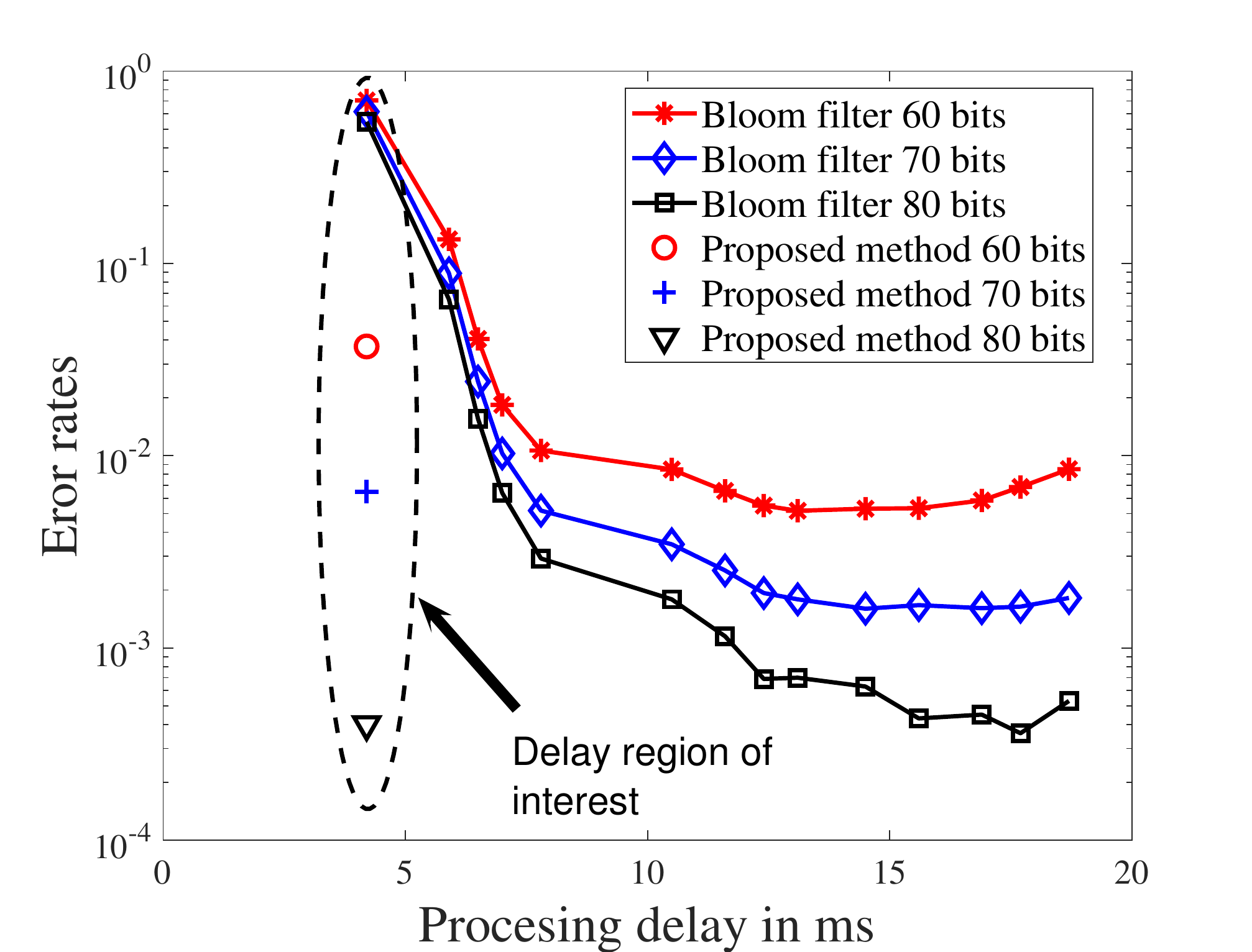}
\vspace{-0.4cm}
\end{center}
\caption{Simulation results depicting the trade-off between the processing delay and the error rates when implementing edge embedding using Bloom filters on a network with $15$ nodes and $4$-hop communication. XBee networks were used to compute the processing delays for the above plots.}
\label{fig_motivation}
\end{figure}

\subsection{Contributions and Novelty}
 
In this work, we address the design of provenance embedding and provenance recovery methods with two constraints. The first constraint is to design provenance embedding with low delay-overheads during packets' journey. The second constraint is to facilitate provenance recovery at the destination without the knowledge of the network topology owing to mobility of nodes. To jointly handle these constraints, we propose a new provenance embedding method called the network-coded edge embedding (NCEE) protocol. In this protocol, a unique identity is assigned to the edges between every pair of nodes in the network, and these identities are pre-shared with the nodes. Subsequently, every node that forwards the packet uses the identity of the edge between itself and the next node (as shown in Figure \ref{fig_motivation}), and algebraically adds it to the existing provenance portion of the packet. This way, the destination, upon receiving a linear combination of a subset of the signatures in the network, attempts to recover the path traversed by the packet by using the signatures of all the edges. While algebraic addition provides the low-latency feature during packets' journey, the use of edge identities as signatures assists the destination in resolving the order of the nodes on the path.\footnote{As an alternative approach, every node could embed its signature instead of the signature allotted to the edge with its next node. However, in this case, since the topology of the network may not be known to the destination, recovering the identity of the nodes \cite{liu2015path} does not reveal the information on the order of nodes that forwarded the packet.} Our specific contributions are listed below:

\noindent 1) We show that the problem of designing provenance recovery methods for the NCEE protocol is equivalent to the celebrated problem of compressed sensing, however, with the additional constraint of path formation on the sparse solution. As a consequence, we derive sufficient conditions on the solution by incorporating the path constraint (see Section \ref{sec:path_constraint}).

\noindent 2) We propose a family of low-complexity orthogonal matching pursuit (OMP)-based algorithms for recovering the provenance of the NCEE protocol. Specifically, we propose a list based OMP algorithm, referred to as the L-OMP algorithm, wherein every iteration of the traditional OMP algorithm collects more than one candidate vectors thereby constructing a list of sparse solutions of varying residual values. This way, the derived path constraint on the sparse solution is applied on the candidates in the list before picking a candidate path with minimum residue as the solution. We show that L-OMP algorithms improve the accuracy of provenance recovery (see Section \ref{sec:lomp}). We further enhance the L-OMP algorithms to propose path-aware list based OMP algorithm, referred to as the PL-OMP algorithm, which enforces the path constraint on the solution at an intermediate step of the L-OMP algorithm. We show that the PL-OMP variant provides substantial benefits in time-complexity as well as in the accuracy of recovery performance (see Section \ref{sec:plomp}).

\noindent 3) Since generalized OMP (gOMP) \cite{gOMP} is a promising extension of OMP, we also extend the list based ideas by proposing L-gOMP algorithms to further reap benefits in the accuracy of provenance recovery. Extensive simulation results are also presented to demonstrate that the proposed OMP and gOMP algorithms showcase superior performance when compared to their path-agnostic counterparts (see Section \ref{sec:sims}). 

In terms of novelty, edge embedding methods are already known in the context of Bloom filters \cite{HTDSC}. However, as exemplified in Fig. \ref{fig_motivation}, Bloom filter based methods exhibit a trade-off between the processing delay and the error rates. This behaviour is attributed to the fact that lower error rates are offered when the number of hash functions used by each node is bounded away from one. As shown in Fig. \ref{fig_motivation}, higher number of hash functions leads to higher processing delay at each node. This implies that when vehicular network applications require delay constraints equivalent to that of one or two hash functions, Bloom filters yield sub-optimal error rates, thereby creating opportunities for new provenance embedding methods. The NCEE protocols fill this gap since the delay incurred when computing linear combinations of edge identities falls in the above pointed delay regime.

\section{Network Model for Provenance Recovery}

We consider a wireless network of $n$ nodes, as shown in Fig.~\ref{fig:6}, wherein one of the nodes is the destination whereas the other nodes have data to communicate with the destination. In the context of vehicular networks, these $n-1$ mobile nodes represent vehicles whereas the destination is the RSU. Assuming that the network topology remains static for a given time interval, referred to as the coherence interval, we represent the network using a graph, denoted by $G(N, E)$, where $N = \{1, 2, \ldots, n\}$ is the set of vertices and $E \subset \{(i, j)~|~ i, j \in N, i \neq j\}$ is the set of edges. Owing to changing topology across coherence intervals, we assume that the destination does not know $G(N, E)$. Suppose node $i_{1}$, intends to communicate a message to the RSU in multi-hop fashion through the relay nodes $i_{2}$, $\ldots$, $i_{h}$, where $h$ denotes the hop-length.\footnote{This work does not concern designing routing algorithms. Instead, assuming the existence of an underlying routing algorithm, we design provenance embedding methods and the corresponding provenance recovery algorithms.} Due to ad-hoc routing of the packets, the path taken by the packets may not be known to the destination. Therefore, it is imperative for the participating nodes to not only forward the packet to the RSU, but also assist the RSU in learning the path on a packet-to-packet basis. In the next section, we propose the NCEE protocol on the above network model.

\subsection{Network-Coded Edge Embedding Protocol}

For a given node $i$ where $i \in N$, the set of all possible edge identities is denoted as $\{\veca_{i, j} \in \mathbb{R}^{m \times 1} ~|~ j \neq i\}$, where $\veca_{i, j}$ is the $m$-length signature of the directed edge from node $i$ to $j$. Formally, let the path chosen by the packet before reaching the destination be $i_{1}$, $i_{2}$, $\ldots$, $i_{h}$. To start the embedding process, the source inserts $\mathbf{a}_{i_{1}, i_{2}}$ in the provenance portion of the packet before forwarding it to node $i_{2}$. Subsequently, $i_{2}$ extracts the provenance, and adds  $\mathbf{a}_{i_{2}, i_{3}}$ to the provenance, thereby forwarding $\mathbf{a}_{i_{1}, i_{2}} + \mathbf{a}_{i_{2}, i_{3}}$. We assume that the packet also hosts a counter, which is incremented at each node on the path. This process continues at all the nodes on the path, and hence, the received provenance at the destination is
\begin{equation}
\label{eq:edge}
\vecy_{E} = \sum_{r = 1}^{h} \veca_{i_{r}, i_{r+1}},
\end{equation}
such that $i_{h + 1} = n$. For the provenance recovery process, the destination first obtains the $m \times (n-1)^2$ matrix $\mathbf{A}_{E}$ by juxtaposing all the edge identities as
\begin{align}
    \matA_{E} = \left[\matA_{E_1}, \matA_{E_2},\hdots,\matA_{E_{n-1}}\right],
    \label{eq:Astructure}
\end{align}
where $\matA_{E_i} \in \mathbb{R}^{m \times (n-1)}$ is the  set of $n-1$ columns corresponding to the edge identities between node $i$ and all other nodes, arranged in the increasing order of their indices. In particular, we have $\matA_{E_i} = [\mathbf{a}_{i, 1} ~\mathbf{a}_{i, 2} \ldots, \mathbf{a}_{i, i-1}, \mathbf{a}_{i, i+1}, \ldots, \mathbf{a}_{i, n}],$ such that $\mathbf{a}_{i, j}$ is the signature as defined earlier. Note that the number of columns in $\mathbf{A}_{E}$ is $(n-1)^{2}$ instead of $n(n-1)$ since the destination discards the edges originating from itself to the other nodes. Subsequently, the destination reformulates the problem of provenance recovery as a problem of sparse recovery of $\hat{\mathbf{x}} \in \{0, 1\}^{(n-1)^2 \times 1}$ such that (i) $\vecy_{E} = \mathbf{A}_{E}\hat{\vecx}$, (ii) the sparsity of $\hat{\vecx}$ is $h$, and (iii) the edge identities corresponding to the non-zero entries of $\hat{\vecx}$ form a path in the network. Note that $h$ is known to the destination from the counter maintained in the packet. While a straightforward method to recover sparse $\hat{\vecx}$ is to employ a standard sparse-recovery algorithms, such methods do not necessarily yield the correct solution since $\hat{\vecx}$ may not be a path. Thus, it is important to design \emph{path-aware} sparse recovery algorithms. 

In the sequel, we derive path constraints on $\hat{\vecx}$, and subsequently, use the path constraints to propose OMP-based path-aware algorithms for provenance recovery.

\section{Path Constraints for Edge Embedding}
\label{sec:path_constraint}

From \eqref{eq:edge}, it is clear that the provenance received at the destination can be represented as $\vecy_{E} = \matA_{E}\vecx$, where $\mathbf{A}_{E} \in \mathbb{R}^{m \times (n-1)^{2}}$ is as given in \eqref{eq:Astructure}, and $\vecx \in \{0, 1\}^{(n-1)^2 \times 1}$ is the $h$-sparse vector that represents the path taken by the packet. Owing to the structure of $\mathbf{A}_{E}$, $\mathbf{x}$ also gets arranged as
\begin{align}
    \mathbf{x} = \left[\mathbf{x}_{E_1}^T,\mathbf{x}_{E_2}^T,\hdots,\mathbf{x}_{E_n}^T\right]^T,
    \label{eq:xstructure}
\end{align}
where $\mathbf{x}_{E_i} \in \{0, 1\}^{n-1 \times 1}$ represents the set of $n-1$ entries corresponding to edges between node $i$ and all other nodes as $\mathbf{x}_{E_i} = [x_{i, 1} ~x_{i, 2} \ldots, x_{i, i-1}, x_{i, i+1}, \ldots, x_{i, n}]^T.$ Given that the system of linear equations in \eqref{eq:edge} does not allow any arbitrary $h$-sparse vector in $\{0, 1\}^{(n-1)^2 \times 1}$ as a solution, we define $\mathcal{P}_{h} \subset \{0, 1\}^{(n-1)^2 \times 1}$ to be the set of all $h$-sparse vectors that form a $h$-length path from a source to the destination. Using this formulation, provenance recovery is possible by solving
\begin{align}
    \hat{\vecx} = \argmin_{\vecx \in \mathcal{P}_{h}} ||{\vecy - \matA_{E}\vecx}||,
    \label{eq:traditionalCombinatorial}
\end{align}
where $||\cdot||$ represents the $\ell_2$ norm. Note that imposing the constraint $\vecx \in \mathcal{P}_h$ requires exhaustive search, and the complexity of solving such a problem increases exponentially with the size of the path. To alleviate this problem, we convert the traditional combinatorial optimization problem in \eqref{eq:traditionalCombinatorial} into a continuous program as given by
\begin{align}
    \hat{\vecx} = \argmin_{\vecx \in \mathbb{R}^{(n-1)^2}} ||{\vecy - \matA_{E}\vecx}|| \quad\quad \nonumber \\ \textnormal{subject to} 
    \quad \{g_{\theta}(\vecx) \leq 0, 1\leq \theta \leq \Theta\},
    \label{eq:traditionalsmooth}
\end{align}
where the set of $\Theta$ functions $\{g_{\theta}(\cdot)\}$, capture the path constraint $\mathbf{x} \in \mathcal{P}_{h}$. Although the problems in \eqref{eq:traditionalCombinatorial} and \eqref{eq:traditionalsmooth} are equivalent, the continuous program in \eqref{eq:traditionalsmooth} eliminates the need for an exhaustive search over $\mathcal{P}_{h}$, and instead, allows us to leverage standard numerical algorithms for constrained problems leading to ease of implementation.

Now we derive the set of constraints $\{g_{\theta}(\vecx) \leq 0, 1 \leq \theta \leq \Theta\}$, that is instrumental in establishing that $\vecx \in \mathcal{P}_{h}$. For a given $\vecx \in \{0,1\}^{(n-1)^2}$, let $G_{\vecx}$ denote the graph formed by the edges corresponding to the non-zero entries of $\mathbf{x}$ and let $\mathbf{W}_{x} \in \{0, 1\}^{n \times n}$ denote the adjacency matrix corresponding to $G_{\vecx}$. If $|\mathbf{x}|_{1} = h$, then $\mathbf{W}_{x}$ contains $h$ non-zero off-diagonal entries. The diagonal entries and entries of the last row of $\mathbf{W}_{\mathbf{x}}$ are implicitly zeros since there are no self-loops in the graph, and edges originating from node $n$ are discarded. We transform the $(n-1)^2$-length $\vecx$ in \eqref{eq:xstructure} to an $n^{2}$-length $\bar{\vecx} \in \{0, 1\}^{n^{2} \times 1}$,
\begin{align}
    \bar{\vecx} = \left[[\bar{\vecx}_{E_1}]^T,[\bar{\vecx}_{E_2}]^T,\hdots,[\bar{\vecx}_{E_n}]^T\right]^T,
    \label{eq:x_new_structure}
\end{align}
where $[\bar{\vecx}_{E_n}]^T = [0,~0,~\ldots,~0]$ is an $n$-length zero vector, and $\bar{\vecx}_{E_i} \in \{0, 1\}^{n}$, for $1 \leq i \leq n-1$, is given by $\bar{\vecx}_{E_i} = [x_{i, 1} ~x_{i, 2} \ldots, x_{i, i-1}, 0, x_{i, i+1}, \ldots, x_{i, n}]^T$. Note that $\vecx_{E_i}$ is converted to $\bar{\vecx}_{E_i}$ by inserting $0$ at the location $x_{i,i}$. Note that the adjacency matrices corresponding to $\vecx$ and $\bar{\vecx}$ are identical. Using $\bar{\vecx}$, we construct $\mathbf{W}_{x}$ for any given $\mathbf{x} \in \{0, 1\}^{(n-1)^2 \times 1}$, in the following proposition.

\begin{prop}
The adjacency matrix $\matW_{x}$ corresponding to the subgraph $G_\vecx$ of $\mathbf{x}$ is $\matW_{x} = \matB\matV\matC$, where $\matV = \textnormal{diag}(\bar{\vecx})$, $\matB \in \mathbb{R}^{n \times n^2}$ is given by $\matB = \matI_n \otimes \mathbf{1}$, and $\matC \in \mathbb{R}^{n^2 \times n}$ is given by $\matC = \mathbf{1}^T \otimes \matI_n$, where $\mathbf{1}$ is an all one column vector of length $n$.   
\label{prop1}
\end{prop}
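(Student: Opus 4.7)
The plan is to verify the factorization entry-wise. By the definition of $\bar{\vecx}$ in \eqref{eq:x_new_structure}, the entry of $\bar{\vecx}$ at flat index $(i-1)n+j$ is exactly $\bar{x}_{i,j}$, which equals $1$ iff the directed edge $(i,j)$ belongs to $G_{\vecx}$; the diagonal positions $(i,i)$ and the entire last block ($i=n$) carry the zeros that were explicitly inserted when passing from $\vecx$ to $\bar{\vecx}$. Consequently, $(\matW_{x})_{ij} = \bar{\vecx}_{(i-1)n+j}$ for every $i$ and $j$, so it suffices to show that $(\matB\matV\matC)_{ij}$ selects exactly this coordinate of $\bar{\vecx}$.

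For the right-hand side, I would expand $(\matB \matV \matC)_{ij} = \sum_{k=1}^{n^2} \matB_{ik}\, \bar{\vecx}_{k}\, \matC_{kj}$, using that $\matV = \mathrm{diag}(\bar{\vecx})$ collapses the natural double sum to a single sum over $k$. The Kronecker factorizations of $\matB$ and $\matC$ yield two simple indicator conditions on the summation index: $\matB_{ik} = 1$ iff $k$ lies in the $i$-th length-$n$ block $\{(i-1)n+1,\ldots,in\}$, and $\matC_{kj} = 1$ iff $k \equiv j \pmod{n}$, i.e., $k \in \{j, n+j, \ldots, (n-1)n+j\}$. These two sets intersect in the unique index $k = (i-1)n+j$, so a single term survives the sum, giving $(\matB\matV\matC)_{ij} = \bar{\vecx}_{(i-1)n+j}$. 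Combining with the first step yields $\matW_{x} = \matB\matV\matC$, as claimed.

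The only delicate point is aligning the Kronecker conventions with the declared dimensions of $\matB$ and $\matC$ and with the row-major ordering used to define $\bar{\vecx}$; once this bookkeeping is pinned down there is no real obstacle, since the proof reduces to an indicator-function intersection. An equivalent coordinate-free alternative, which could be given in one line, is to observe that $\bar{\vecx}$ is the row-major vectorization of the adjacency matrix $\matW_{x}$, and that sandwiching $\matV = \mathrm{diag}(\bar{\vecx})$ between $\matB$ and $\matC$ is precisely the standard Kronecker identity that inverts this vectorization on a diagonally-embedded vector.
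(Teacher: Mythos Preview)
Your proof is correct and follows essentially the same index-bookkeeping approach as the paper: both arguments identify that the role of $\matB$ is to select the $i$-th length-$n$ block of $\bar{\vecx}$ and the role of $\matC$ is to select the residue class $k\equiv j\pmod n$, so that the product isolates $\bar{\vecx}_{(i-1)n+j}=(\matW_x)_{ij}$. Your entry-wise verification is in fact more explicit than the paper's proof, which describes the block structure of $\matV$ and the index map $d\mapsto(\lceil d/n\rceil,\,d-n(\lceil d/n\rceil-1))$ and then asserts that left- and right-multiplication by $\matB$ and $\matC$ realize this map without writing out the sum; your caveat about aligning the Kronecker conventions with the declared dimensions is also well taken.
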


\begin{proof}
Let $\mathbf{V} = \textnormal{diag}(\bar{\vecx})$ be the diagonal matrix with diagonal entries given by $\bar{\vecx}$. We observe that $\mathbf{V}$ is a block-diagonal matrix containing $n$ blocks such that the $i$-th block-diagonal matrix, for $1 \leq i \leq n$, contains the identity of the edges originating from node $i$. Therefore, if $\mathbf{V}(d, d) = 1$, for $v \in 1 \leq v \leq n^{2}$, then the $\lceil{\frac{d}{n}}\rceil$th row and the $(d-(n(\lceil{\frac{d}{n}}\rceil-1)))$-th column entry of $\mathbf{W}_{x}$ must be one. To achieve this, we multiply $\matV$ by $\matB \in \mathbb{R}^{n \times n^2}$ from the left where $\matB = \matI_n \otimes \mathbf{1}$. Furthermore, in order to convert $\matB\matV$ into a adjacency matrix, the edge information contained in these $n$ blocks are brought together as an $n \times n$ matrix by multiplying $\matB\matV$ by $\matC \in \mathbb{R}^{n^2 \times n}$ from the right where $\matC = \mathbf{1}^T \otimes \matI_n$. This completes the proof.
\end{proof}

\begin{prop}
A $h$-sparse vector $\mathbf{x} \in \{0, 1\}^{(n-1)^2 \times 1}$ belongs to $\mathcal{P}_{h}$ if there exists $i \in \mathbb{N}$ such that $\mathbf{1}_i^T\matW_{x}^h\mathbf{1}_n - 1 = 0$ and $\mathbf{1}_i^T\matW_{x}^t\mathbf{1}_n = 0, \mbox{ for } 1 \leq t \leq h-1$ such that $\mathbf{1}_i$ represents a vector with one at the $i$-th entry and zeros at other places. 
\end{prop}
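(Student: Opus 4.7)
The plan is to interpret the two algebraic conditions combinatorially in terms of walks in the subgraph $G_\vecx$ whose adjacency matrix is $\matW_x$ (as constructed in Proposition \ref{prop1}), and then to promote the existence of a walk into a simple path by exploiting the minimality encoded in the second condition.

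First, I would invoke the standard fact that for any $\{0,1\}$-valued adjacency matrix $\matW_x$, the entry $(\matW_x^t)_{i,j}$ equals the number of walks of length exactly $t$ from $i$ to $j$ in $G_\vecx$. Because $\mathbf{1}_i$ is the $i$-th standard basis vector, $\mathbf{1}_i^T \matW_x^t \mathbf{1}_n = (\matW_x^t)_{i,n}$. Hence the hypothesis rephrases as: there is exactly one walk of length $h$ from $i$ to $n$ in $G_\vecx$, and no walk from $i$ to $n$ of any shorter positive length.

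Next, fix the unique walk $v_0 = i, v_1, v_2, \ldots, v_h = n$ guaranteed by the first condition. I would show, by contradiction, that all the $v_r$ are distinct. Suppose $v_j = v_k$ for some $0 \leq j < k \leq h$. Excising the sub-walk between positions $j$ and $k$ yields the shortened sequence $v_0, v_1, \ldots, v_j, v_{k+1}, \ldots, v_h$. Every edge of this sequence is an edge of the original walk (the step $v_j \to v_{k+1}$ coincides with the edge $v_k \to v_{k+1}$ since $v_j = v_k$), hence it belongs to $G_\vecx$. This produces a walk from $i$ to $n$ of length $h-(k-j) < h$, contradicting the second hypothesis. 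Therefore the walk is a simple path.

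Finally, I would count edges to close the argument. A simple path of length $h$ traverses $h$ pairwise distinct edges, all of which lie in $G_\vecx$. Since $|\vecx|_1 = h$ equals the number of edges of $G_\vecx$, the edges of the path exhaust $G_\vecx$. Consequently the non-zero support of $\vecx$ describes precisely an $h$-hop simple path from source $i$ to destination $n$, which is exactly the definition of $\mathcal{P}_h$. The main obstacle I anticipate is the cycle-excision step: one must carefully justify that splicing out a repeated vertex produces a bona fide walk in $G_\vecx$ (rather than in some larger graph), which is why having $\matW_x$ be the adjacency matrix of the support of $\vecx$, as established in Proposition \ref{prop1}, is essential.
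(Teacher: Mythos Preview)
Your proof is correct and follows the same route as the paper: read $\mathbf{1}_i^T \matW_x^t \mathbf{1}_n$ as the $(i,n)$ entry of $\matW_x^t$, i.e., the number of length-$t$ walks from $i$ to $n$ in $G_\vecx$, and deduce that the unique length-$h$ walk must be a simple path because no shorter walk exists. The paper simply invokes this as a textbook fact (citing Rosen), whereas you spell out the cycle-excision argument explicitly; you also add the edge-counting observation that the $h$ distinct edges of the simple path exhaust the $h$-sparse support of $\vecx$, which is what is actually needed to place $\vecx$ in $\mathcal{P}_h$ and which the paper's proof leaves implicit.
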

\begin{proof}
From Proposition~\ref{prop1}, we can obtain the adjacency matrix $\matW_{x}$ corresponding to $G_x$. We also know that the destination is node $n$. It is well known that if the $(i,n)$-th entry in $\matW_{x}^h$ is one, and the $(i,n)$-th entry in $\matW_{x}^l$ is $0$ for $1 \leq l < h$, then there exists a $h$-length path between node $i$ and node $n$ \cite{Rosen}. In the statement of the theorem, we used the vectors $\mathbf{1}_i^T$ and $\mathbf{1}_n$ to fetch the $(i,n)$-th entry of $\mathbf{W}_{x}$ using $\mathbf{1}_i^T\matW^l\mathbf{1}_n$, where $\mathbf{1}_i$ is an $n$ length column vector that has one at the $i$-th position and zeros at all other positions. This completes the proof.  
\end{proof}

The above proposition leads to the set of conditions specified in \eqref{eq:traditionalsmooth}. One of the popular methods to solve \eqref{eq:traditionalsmooth} is to use the basis pursuit based optimization framework \cite{471413}, which in turn necessitates using high-complexity optimization solvers. However, from an implementation viewpoint, a well-known low-complexity greedy technique to solve the sparse recovery problem is the OMP algorithm. Therefore, in the next section, we modify the vanilla OMP framework by integrating the constraints $\mathbf{1}_i^T\matW_{x}^h\mathbf{1}_n - 1 = 0$ and $\mathbf{1}_i^T\matW_{x}^t\mathbf{1}_n = 0, \mbox{ for } 1 \leq t \leq h-1$, so that the solution $\hat{\vecx}$ is a path.

\section{Path-Aware OMP Algorithms}

In this section, we present a family of OMP based algorithms that are designed to increase the likelihood of recovering a path by using the path-constraints of the previous section. Towards that direction, the following definition is important. 

\begin{definition}
\label{beta_max}
Given a set $\mathcal{M}$ containing a finite set of numbers in $\mathbb{R}^{+}$, the notation $\mathop{\beta\mbox{-}\textnormal{arg\,max}}$ $\mathcal{M}$, for $\beta \in \mathbb{N}$, represents $\beta$-th largest element of $\mathcal{M}$.  
\end{definition}

For a given $h, L \in \mathbb{N}$, let $\Gamma = (\alpha_{1}, \alpha_{2}, \ldots, \alpha_{h})$ such that $\alpha_{j} \in [L]$, for any $1 \leq j \leq h$. Furthermore, for any subset $S \subseteq  [(i,j)~|~ i \in N, j \in N, i \neq j]$, let $\matA_{S}$ represent a sub-matrix of $\matA_E$, formed by the columns indexed by $S$. Using the above definitions, we define a variant of the well known OMP algorithm, henceforth, referred to as $\Gamma$-OMP algorithm.

\subsection{$\Gamma$-OMP Algorithm}
\label{gamma_OMP}

For a given $\Gamma = (\alpha_{1}, \alpha_{2}, \ldots, \alpha_{h})$ defined above, the steps of the proposed $\Gamma$-OMP algorithm are given below.
 
\begin{itemize}
    \item Step 1: Let $\vecr_{0} = \vecy_{E}$, $S = \{\}$, and $\eta = 1$.
    \item Step 2: Find the column with index $\veca_{i_{\eta},j_{\eta}}$ of $\matA$ that maximizes the following:
    \begin{align}
    \label{eq:max_projection}
      (i_{\eta},j_{\eta})  = \mathop{\alpha_\eta\mbox{-}\textnormal{arg\,max}}\limits_{p, q \in N, p\neq q} \veca_{(p,q)}^T\vecr_{\eta - 1},
    \end{align}
    wherein the notion of $\alpha_{\eta}\mbox{-}\argmax$ follows from Definition \ref{beta_max}. Update the set $S$ as $S \leftarrow S \cup (i_{\eta},j_{\eta})$. 
\item Step 3: Compute $\matP_{S} = \matA_{S}(\matA_{S}^T\matA_{S})^{-1}\matA_{S}^T$. Update the residual $\vecr_{\eta} = (\matI -\matP_{S})\vecy_E$.
\item Step 4: If $\eta < h$, update $\eta \leftarrow \eta+1$ and return to Step 2, otherwise, exit the algorithm.
\end{itemize}
At the output $S = \{(i_{1}, j_{1}), (i_{2}, j_{2}), \ldots, (i_{h}, j_{h})\}$ contains the list of $h$ edges. A distinguishing feature of this algorithm is \eqref{eq:max_projection}, wherein $\alpha_{\eta}\mbox{-}\argmax$ operator is specific for a given iteration ($\eta$) of the algorithm. Note that the above algorithm collapses to the vanilla OMP algorithm when $\Gamma = [1, 1, \ldots, 1]$. 

The $\Gamma$-OMP algorithm fails to solve \eqref{eq:traditionalsmooth} as it does not incorporate the path constraint. While the path constraint check can be applied on its final solution as a retrofit, such a step only helps in verifying whether $\vecx \in \mathcal{P}_h$, thereby not altering the error rates in provenance recovery.  

\subsection{List Based OMP Algorithm (L-OMP)}
\label{sec:lomp}

In this section, we extend the vanilla OMP algorithm such that each iteration generates a list of $L$ columns of $\mathbf{A}_{E}$ that maximizes the projection w.r.t. the residue in \eqref{eq:max_projection}. In particular, for every candidate in the list generated at the $\eta$-th iteration, a new list of $L$ columns is generated at the $(\eta+1)$-th iteration, thereby leading to a list of $L^{h}$ solutions after $h$ iterations. Among these $L^{h}$ candidates, we select $\vecx \in \mathcal{P}_h$ if it exists, and then specifically choose a solution that has the least residue among them. The L-OMP algorithm can be stated elegantly using $\Gamma$-OMP in an iterative manner as follows: 
\begin{itemize}
    \item Step~1: Set $r \leftarrow 1$. Construct an ordered set $\mathcal{S} = \{(\alpha_{1}, \alpha_{2}, \ldots, \alpha_{h}) ~|~ \alpha_{j} \in [L], \forall j\}$.  Set $\mathcal{L} \leftarrow \{\}$ to store the list of OMP solutions. 
    \item Step~2: $\Gamma \leftarrow \mathcal{S}(r)$, where $\mathcal{S}(r)$ is the $r$-th entry of $\mathcal{S}$.   
    \item Step 3: Invoke $\Gamma$-OMP algorithm, and append its solution to $\mathcal{L}$.
    \item Step 4: If $r < L^{h}$, then update $r \leftarrow r + 1$, and go to Step 2, otherwise, go to Step 5.
    \item Step 5: Use path-constraint check on the entries of $\mathcal{L}$ to generate a shorter-list $\bar{\mathcal{L}}$ comprising candidates that satisfy the path-constraint. 
    \item Step 6: Among the candidates in $\bar{\mathcal{L}}$, pick the solution with minimum residue, i.e., $\hat{\mathbf{x}} =  \arg \min_{\mathbf{x} \in \bar{\mathcal{L}}}||\mathbf{y}_{E} - \mathbf{A}_{E}x||$.
\end{itemize}
In contrast to vanilla OMP, the L-OMP algorithm increases the likelihood of recovering a solution in $\mathcal{P}_{h}$.  

\subsection{Path-aware List-based OMP (PL-OMP) Algorithm}
\label{sec:plomp}

Although the L-OMP algorithm increases the likelihood of recovering a path by generating a list of solutions, it does not introduce fundamental changes to incorporate path awareness in every iteration. To impose a path constraint during the $\eta$-th iteration of the OMP algorithm, it is clear that the edge $(i_{\eta}, j_{\eta})$ obtained from \eqref{eq:max_projection} must be such that $i_{\eta} \notin \mathcal{I}_{\eta - 1} \triangleq \{i_{1}, i_{2}, \ldots, i_{\eta - 1}\}$ and $j_{\eta - 1} \notin \mathcal{J}_{\eta - 1} \triangleq \{j_{1}, j_{2}, \ldots, j_{\eta - 1}\}$. Therefore, we propose to replace \eqref{eq:max_projection} by
 \begin{align}
    \label{eq:max_projection_path_aware}
      (i_{\eta},j_{\eta})  = \mathop{\alpha_\eta\mbox{-}\textnormal{arg\,max}}\limits_{p, q \in N, p\neq q, p \notin \mathcal{I}_{\eta - 1}, q \notin \mathcal{J}_{\eta-1}} \veca_{(p,q)}^T\vecr_{\eta - 1}.
    \end{align}
In addition to the above change, we note that as $L$ and $h$ increase, the implementation complexity of the L-OMP algorithm also increases. To circumvent this problem, we propose a novel path-aware low-complexity variant of the L-OMP algorithm, with improved accuracy in some cases. The central idea of PL-OMP is to generate a shorter list (of size $L^{h-1}$) by executing the L-OMP algorithm for sparsity $h-1$, and then obtain the $h$-sparse solution by completing the path on those candidates that have one missing link in the solution. Formally, we use the following definition to introduce the missing-link solution.

\begin{definition}
Let $\mathbf{x} \in \{0, 1\}^{(n-1)^2 \times 1}$ represent a solution of the L-OMP algorithm at the end of the $(h-1)$-th iteration. Furthermore, let $G_{\mathbf{x}}$ denote the subgraph corresponding to $\mathbf{x}$. Then $\mathbf{x}$ is referred to as the missing-link solution if $G_{\mathbf{x}}$ comprises two disjoint paths of lengths $a \geq 0$ and $b \geq 0$ such that $a + b = h-1$.
\end{definition}

Once a missing-link solution is obtained, it is straightforward to observe that a path of hop-length $h$ can be formed by connecting the appropriate vertices of the two paths of $G_{\mathbf{x}}$. Note that this task can be achieved since the identity of the source node and the destination node are known. The following algorithm provides a method to check whether a candidate in the list (at the end of the $(h-1)$-th iteration) is a missing-link, or otherwise. To execute this check on any $\mathbf{x} \in \{0, 1\}^{(n-1)^2}$, we use the corresponding adjacency matrix $\mathbf{W}_{\mathbf{x}}$ obtained using Proposition \ref{prop1}. The algorithm for missing-link detection is given below. For the description of the algorithm, we assume that node $s$ is the source node, and $\mathbf{W}_{\mathbf{x}}$ is already computed for the candidate $\mathbf{x}$ under test.
\begin{itemize}
    \item Step 1: If $\sum_{j=1}^{n} \matW_{\mathbf{x}}(k,j) > 1$ OR 
    $\sum_{i=1}^{n} \matW_{\mathbf{x}}(i,k) > 1$, for any $k$ such that $1 \leq k \leq n$, then candidate $\mathbf{x}$ has more than one edges from a node and does not qualify as a missing link solution, else go to Step 2.
    \item Step 2: Let $\mathbf{w}_{s}^i$ and $\mathbf{w}_{d}^i$ denote $\matW^i_{\mathbf{x}}(s,:)^\intercal$ and $\matW^i_{\mathbf{x}}(:,n)$.
    Construct 
    \begin{equation}
    \label{S_matrix}
    \matS = [\mathbf{w}_{s}^1, \mathbf{w}_{s}^2 \ldots \mathbf{w}_{s}^{h-1}] \in \mathbb{R}^{n \times h-1}
    \end{equation}
    \begin{equation}
    \label{D_matrix}
    \matD = [\mathbf{w}_{d}^1, \mathbf{w}_{d}^2 \ldots \mathbf{w}_{d}^{h-1}]  \in \mathbb{R}^{n \times h-1}.
    \end{equation}
    \item Step 3: Initialise: 
    $\mathbf{q} = [1,~0,~\ldots,~0]^{1 \times h-1}$, $a = 0, b = 0$, $node_a = s, node_b = n$; where $node_a$ and $node_b $ are the nodes connected to node s and node n, respectively. The variables $a$ and $b$ are used to store  path lengths from node s to $node_a$ and node n to $node_b$, respectively.
    \item Step 4: To find $a$ and $node_a$, call the function [$a$,$~node_a$] = \texttt{FINDLENGTH}($\mathbf{q},\mathbf{S}$).

    \item Step 5: To find $b$ and $node_b$, call the function [$b$,$~node_b$] = \texttt{FINDLENGTH}($\mathbf{q},\mathbf{D}$).
    
    \item Step 6: If $a + b = h-1$, then candidate $\mathbf{x}$ is a missing-link solution, else it is not. In the former case, the missing link is the edge from node $node_{a}$ to node $node_{b}$. 
\end{itemize}

The \texttt{FINDLENGTH} function that is used to find the nodes connected to the source the destination is presented in Algorithm \ref{Algo}.

\begin{algorithm}
\DontPrintSemicolon 
  \KwInput{length = 0, $\mathbf{q} = [1, 0, \ldots, 0]$, $\mathbf{Q}$ (which is either \eqref{S_matrix} or \eqref{D_matrix})}
  \KwOutput{length, node}
  \For{iter = 1 to $h-1$} 
  {
  \If{$\mathbf{Q}(i, :) == \mathbf{q}$ for any $1 \leq i \leq n-1$}
  {
  length $\leftarrow$ length + 1\\
  node $\leftarrow i$\\
  Update $\mathbf{q}$ after right shifting by one\\
  }
  \Else
  {
  break\\
  }
  }
  \caption{\label{Algo} \texttt{FINDLENGTH} Algorithm}
\end{algorithm}

We now present the PL-OMP algorithm that executes the L-OMP algorithm up to $h-1$ iterations, and then obtains the solution by applying the above missing-link algorithm. 
\begin{itemize}
    \item Step~1: Set $r \leftarrow 1$. Construct an ordered set $\mathcal{S} = \{(\alpha_{1}, \alpha_{2}, \ldots, \alpha_{h-1}) ~|~ \alpha_{j} \in [L], \forall j\}$ . Let $\mathcal{L} \leftarrow \{\}$ to store the output of $\Gamma$-OMP algorithm for each $\Gamma \in \mathcal{S}$. 
    \item Step~2: $\Gamma \leftarrow \mathcal{S}(r)$, where $\mathcal{S}(r)$ is the $r$-th element of  $\mathcal{S}$.   
    \item Step~3: Call the $\Gamma$-OMP algorithm in Section \ref{gamma_OMP} with inputs $\mathbf{y}_{E}$ and $\mathbf{A}_{E}$, and append its solution to $\mathcal{L}$. Note that \eqref{eq:max_projection} must be replaced by \eqref{eq:max_projection_path_aware} in the $\Gamma$-OMP algorithm. 
    \item Step~4: If $r < L^{h-1}$, then update $r \leftarrow r + 1$, and go to Step 2, otherwise, go to Step 5.
    \item Step~5: Run the missing-link algorithm on the elements of $\mathcal{L}$ to generate  $\bar{\mathcal{L}} \subset \mathcal{L}$ that consists of candidate solutions that satisfy the missing-link constraint. 
    \item Step~6: Complete the path for candidate solutions in $\bar{\mathcal{L}}$, and choose one path in $\bar{\mathcal{L}}$ that minimizes the residue. 
\end{itemize}

The following proposition provides guarantees on the accuracy of PL-OMP with respect to L-OMP.

\begin{prop}
For a given $L$ and $h$, the error rate of PL-OMP algorithm is upper bounded by that of L-OMP algorithm 
\end{prop}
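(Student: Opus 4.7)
The plan is to establish the event inclusion $\{\text{L-OMP succeeds}\} \subseteq \{\text{PL-OMP succeeds}\}$, from which the stated error-rate inequality follows on taking complements. I would begin by characterizing L-OMP success as the existence of some $\Gamma^{*} = (\alpha_{1}^{*}, \ldots, \alpha_{h}^{*}) \in [L]^{h}$ such that the $\Gamma^{*}$-OMP of Section~\ref{gamma_OMP} sequentially selects the true-path edges $e_{1}, \ldots, e_{h}$, where $e_{r} = (i_{r}, i_{r+1})$ and $i_{h+1} = n$; the path $P$ then lies in $\mathcal{L}$, passes the path check, and wins Step~6 of L-OMP because its residue is zero.

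The central step is to manufacture a PL-OMP run that recovers $P$. I would construct $\Gamma^{\prime} = (\alpha_{1}^{\prime}, \ldots, \alpha_{h-1}^{\prime}) \in [L]^{h-1}$ by induction on $\eta \leq h-1$: assuming the path-aware $\Gamma^{\prime}$-OMP has already selected $e_{1}, \ldots, e_{\eta-1}$, its residue $\vecr_{\eta-1}$ agrees with that of $\Gamma^{*}$-OMP at the same stage, since the projector onto the span of the chosen columns is identical. Because $P$ is a simple path, $i_{\eta} \notin \mathcal{I}_{\eta-1}$ and $i_{\eta+1} \notin \mathcal{J}_{\eta-1}$, so $e_{\eta}$ belongs to the restricted search set of \eqref{eq:max_projection_path_aware}. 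Since this restricted set is contained in the full edge set, the rank $\alpha_{\eta}^{\prime}$ of $e_{\eta}$ inside it satisfies $\alpha_{\eta}^{\prime} \leq \alpha_{\eta}^{*} \leq L$, so $\Gamma^{\prime} \in [L]^{h-1}$ is a valid index. The resulting $\Gamma^{\prime}$-OMP run therefore appends the $(h-1)$-edge sub-path $e_{1}, \ldots, e_{h-1}$ to the PL-OMP list $\mathcal{L}$.

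Next, I would verify that this sub-path passes the missing-link test of Section~\ref{sec:plomp} with $a = h-1$, corresponding to the forward stretch from $s = i_{1}$ to $node_{a} = i_{h}$, and $b = 0$, corresponding to the trivial stub at $node_{b} = n$; hence the candidate enters $\bar{\mathcal{L}}$ and the completion step inserts the edge $(i_{h}, n) = e_{h}$, reconstructing $P$. Because $P$ has zero residue it wins Step~6 of PL-OMP, completing the inclusion. The main obstacle is the inductive construction in the second paragraph: one must certify that the path-aware restriction never excludes the edge chosen by $\Gamma^{*}$-OMP. The saving observation is that the restriction only discards edges whose endpoint reuse would violate the simple-path property, and genuine path edges never lie in this excluded set; a deterministic tie-breaking convention shared by both algorithms then handles coincident projections and possible non-unique zero-residue solutions.
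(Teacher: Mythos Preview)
Your event-inclusion strategy is the same as the paper's, and your explicit handling of the path-aware restriction \eqref{eq:max_projection_path_aware} is in fact more careful than the paper's argument, which simply asserts that the $(h-1)$-stage lists of L-OMP and PL-OMP coincide and then observes that PL-OMP brute-forces the last edge while L-OMP may not.

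There is, however, a gap in your execution. Your characterization of L-OMP success assumes the witnessing $\Gamma^{*}$-OMP run selects the path edges \emph{in path order}: $e_{1}$ at iteration~$1$, then $e_{2}$, and so on. But L-OMP success only guarantees that some $\Gamma^{*}$ has output \emph{set} $S=\{e_{1},\ldots,e_{h}\}$; the order of selection is dictated by projection magnitudes and can be any permutation $\sigma$. It can easily happen that no $\Gamma^{*}\in[L]^{h}$ picks $e_{1}$ first (e.g.\ when $\veca_{i_{1},i_{2}}^{T}\vecy_{E}$ is not among the $L$ largest correlations), yet L-OMP still succeeds via a permuted selection. Your inductive construction of $\Gamma^{\prime}$ and your missing-link verification ($a=h-1$, $b=0$, missing edge $e_{h}$) then do not apply as written.

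The repair is light: replace $e_{\eta}$ by $e_{\sigma(\eta)}$ throughout. The restriction in \eqref{eq:max_projection_path_aware} still never excludes $e_{\sigma(\eta)}$, because on a simple path every node is the tail of at most one edge and the head of at most one edge, so $i_{\sigma(\eta)}\notin\mathcal{I}_{\eta-1}$ and $i_{\sigma(\eta)+1}\notin\mathcal{J}_{\eta-1}$ regardless of which path edges were chosen earlier; hence $\alpha_{\eta}^{\prime}\leq\alpha_{\eta}^{*}\leq L$ still holds. After $h-1$ iterations the omitted edge is $e_{\sigma(h)}=e_{k}$ for some $k$, the two disjoint sub-paths have lengths $a=k-1$ and $b=h-k$, and the missing-link routine inserts $e_{k}$. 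With this adjustment your argument goes through and is strictly more detailed than the paper's.
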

\begin{proof}
At the end of the $(h-1)$-th iteration, the list of candidates (of size $L^{h-1}$) generated by L-OMP and PL-OMP are identical. Among the candidates in this list, a candidate that does not satisfy the missing-link criterion does not appear in $\bar{\mathcal{L}}$ of L-OMP algorithm irrespective of the edge added by the $h$-th iteration. This implies that the missing-link algorithm does not reject the solution of L-OMP algorithm. Furthermore, among the candidates that satisfy the missing link constraint, the L-OMP algorithm may or may not add the missing-link as part of its $h$-th iteration. However, the PL-OMP algorithm adds the correct missing-link by brute-force, thereby providing an opportunity to improve the error-rates of PL-OMP over the L-OMP algorithm. This completes the proof.
\end{proof}

In the rest of this section, we show that the worst-case complexity of PL-OMP is \emph{lower} than that of L-OMP. Since the first $h-1$ iterations for L-OMP and PL-OMP are the same, we discuss the difference in complexity of the two schemes while adding the $h$-th edge to the solution. First, we note that PL-OMP skips the $h$-th iteration of the traditional OMP on each candidate of the list, thereby bypassing the complexity needed to compute the pseudo-inverse of $\mathbf{A}_{S}$. This leads to a complexity reduction of $O((n-1)^2m + mh + mh^{2} + h^{3})$ \cite{6333943}. In addition, PL-OMP obtains the solution by applying the missing-link check on a list of size $L^{h-1}$, whereas L-OMP obtains the solution by applying the path-constraint check on a list of size $L^{h}$. Thus, the missing-link algorithm in PL-OMP has a complexity of $O(nhL^{h-1} + (h-1)n^{3}L^{h-1})$, whereas L-OMP has a complexity of $O(hL^{h} + hn^{3}L^{h}) + O(L^{h-1}((n-1)^2m + mh + mh^{2} + h^{3}))$ after the $h$-th iteration. In Fig.~\ref{fig:5}, we plot the above two worst-case numbers as well as their difference as a function of $L$ and $h$. From the plots, it is clear that for the same input parameters, PL-OMP requires fewer operations as compared to L-OMP.

\begin{figure}
\begin{center}
\includegraphics[scale=0.46]{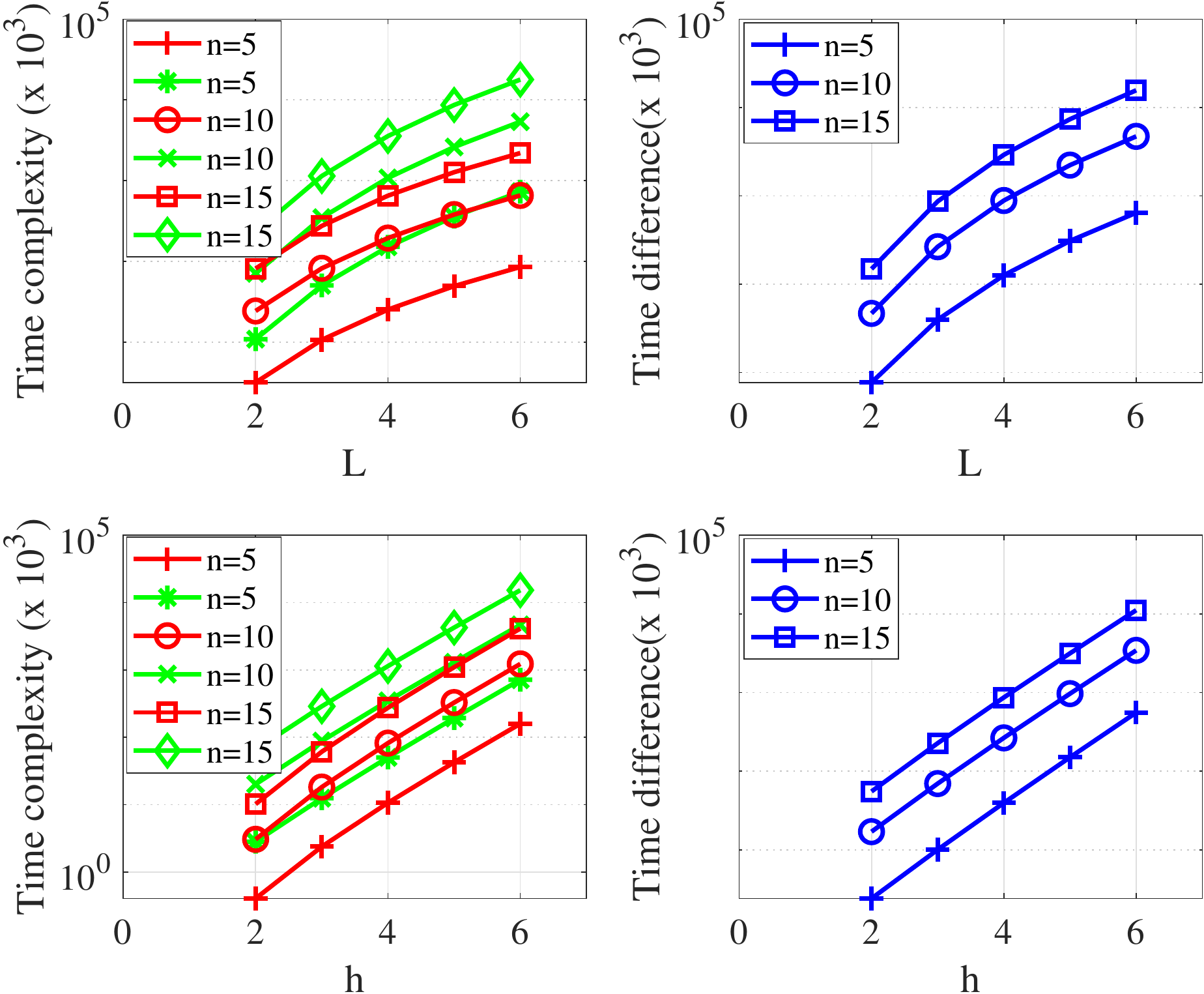}
\end{center}
\vspace{-0.3cm}
\caption{Time complexity of edge-embedding as a function of $L$ and $h$ for L-OMP and PL-OMP, with $m=8$. Red and green plots correspond to PL-OMP and L-OMP respectively. Here, $h$ and $L$ represent the hop-length and the list-size, respectively.}
\label{fig:5}
\end{figure}

\begin{figure}
\begin{center}
\includegraphics[scale=0.38]{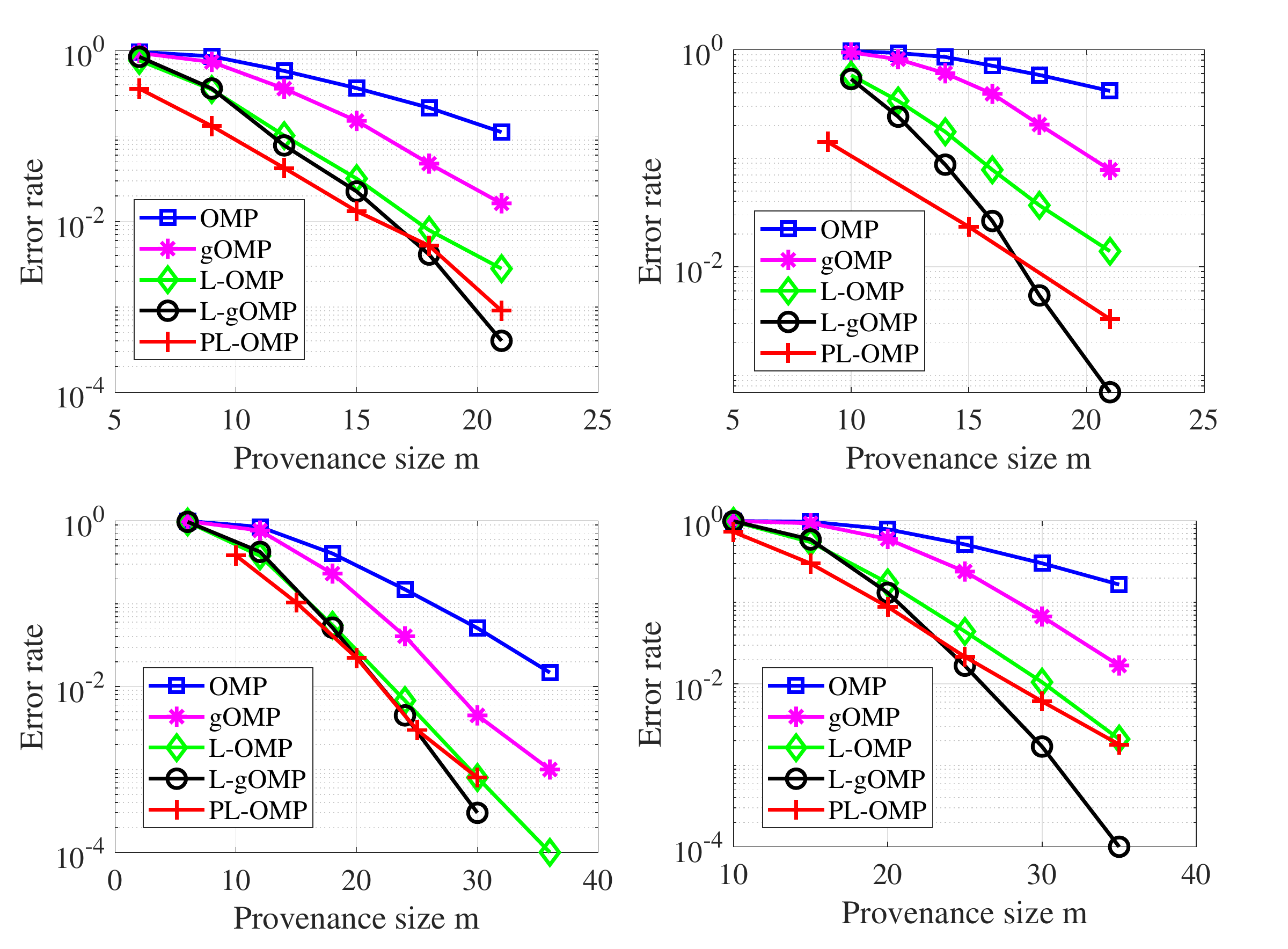}
\end{center}
\vspace{-0.5cm}
\caption{Error rate of OMP, gOMP and its variants as a function of $m$. Clockwise from top-left, the plots correspond to $(n,h)$ of $(6,3), (6,5), (9,5) \mbox{ and } (9,3)$.}
\label{fig:1}
\end{figure}

\begin{figure}
\begin{center}
\includegraphics[scale=0.36]{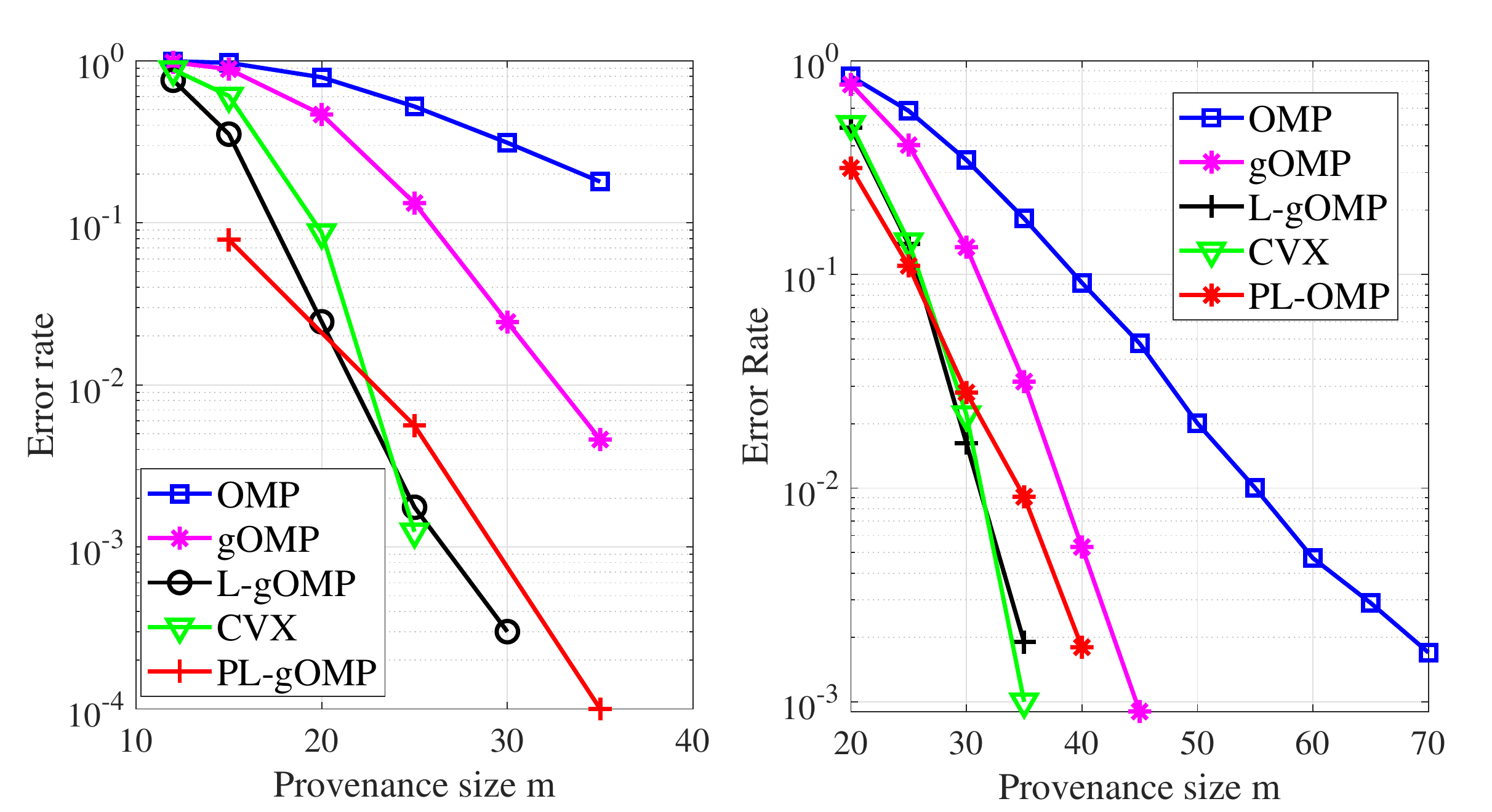}
\end{center}
\vspace{-0.5cm}
\caption{Error rate of CVX, OMP variants and gOMP variants as a function of $m$ with $n=7$, $h=6$ (left-side) and $n=15$, $h=4$ (right-side).}
\label{fig:3}
\end{figure}

\subsection{List based Generalized OMP Algorithms}

Generalized OMP algorithms (gOMP) \cite{gOMP} are well known extensions of the vanilla OMP algorithm for providing significant benefits in error rates. As its salient feature, every iteration in gOMP picks a group of $v$ columns, for $v > 1$, when maximising the projection w.r.t. the residue. This number is chosen based on the condition that the number of iterations is bounded by $min\{h, \frac{m}{v}\}$ (please see \cite[Table 1]{gOMP}). As a result, the number of iterations in gOMP can be fewer than that of the vanilla OMP algorithm. As an extension to L-OMP, we propose a list based gOMP (L-gOMP) in this section. To generate a list at every iteration of gOMP, we select the first $w$ columns of $\mathbf{A}_{E}$ from the projection operation such that $w > v$ (the identification step of \cite[Table 1]{gOMP}). Consequently, this provides a list of $w \choose v$ columns of $\mathbf{A}_{E}$ for computing the residual. Along the similar lines, for every batch of $v$ columns picked in the $j$-th iteration, a new list of $w \choose v$ columns of $\mathbf{A}_{E}$ is chosen in the $(j+1)$-th iteration. In this process, the L-gOMP produces a list of size less than or equal to $L^{\kappa}$ support sets where $L$ is $w \choose v$ and $\kappa$ is the total number of iterations used by the gOMP algorithm. Using this list, we execute the last step of gOMP (given by output step of \cite[Table 1]{gOMP})) on every candidate in the list to generate $L^{\kappa}$ sparse vectors. Finally, the proposed path-constraint criterion is applied on the list before choosing the path that provides minimum residue. 
\begin{remark}
Unlike the vanilla OMP algorithm, the number of columns fed to the last step of the gOMP can be more than the sparsity of $\mathbf{x}$. Also, each iteration of gOMP selects more than one column. Due to these properties, incorporating the missing-link algorithms to gOMP is not easy. In fact, even if we would like to terminate the gOMP algorithm one iteration early, we would need to complete the path by filling in two or more missing links depending on the value of $v$. Furthermore, since the support set to the last step can be more than sparsity, we would not be able to determine which specific columns among the set would be recovered in the end. Due to these observations, imposing path-aware constraints within the gOMP algorithm is a challenging task.
\end{remark}

\section{Simulation Results}
\label{sec:sims}

In this section, we demonstrate the efficacy of the proposed OMP-based algorithms when recovering the provenance in the NCEE protocol. For the simulation setup, we construct a complete graph for a given $n$, and then select an $h$-hop path between a source and the destination. For each $i, j \in N$ such that $i \neq j$, we obtain $\veca_{i,j}$ such that each entry of the vector is statistically independent, and is sampled from a Gaussian distribution $\mathcal{N}(0,1)$. We also assume that the identities $\veca_{i,j}$ are statistically independent. To generate the results, the NCEE protocol is implemented for $10^4$ packets, wherein a packet is said to be in error if the recovered path $\hat{\vecx}$ is different from the traveled path. Thus, the error rate is defined as the fraction of packets for which the provenance recovery results in error. 

In Fig.~\ref{fig:1}, we present the error rates of the OMP, L-OMP and PL-OMP algorithms when $n = 6, 9$, $h = 3, 5$, and $L=3$. The plots show that L-OMP has a superior performance compared to the vanilla OMP although both methods are inherently path-agnostic. This behaviour is attributed to the fact that L-OMP generates a list of sparse solutions rather than just one, which in turn increases the likelihood of obtaining the path among candidate solutions. Further, we observe that PL-OMP has superior performance as compared to L-OMP, and this is because of two reasons. The first reason is attributed to the path constraint imposed by \eqref{eq:max_projection_path_aware} in every iteration of the OMP-algorithm, The second reason is that in the penultimate iteration, PL-OMP detects and completes a missing-link, ensuring brute-force path formation. In contrast, in the L-OMP algorithm, no path-constraint is imposed in every iteration, and moreover, the $L$ candidates chosen in the last iteration of L-OMP may not ensure path formation. In Fig.~\ref{fig:1}, we also present the error rates for gOMP and L-gOMP as a function of $m$. First, we note that gOMP outperforms OMP, and this behaviour is well known \cite{gOMP}. In addition, we also observe that L-gOMP outperforms gOMP. The reasoning for this observation remains same as noted in the case of OMP. Finally, in Fig.~\ref{fig:3}, we present the error rates of all the proposed algorithms along with that of $\ell_1$ minimization problem (CVX-based implementation) for a small and moderately large network sizes of $n = 7$ and $n = 15$, respectively. The plots show that L-gOMP outperforms both L-OMP and CVX when the column size is small, however, the CVX implementation outperforms the rest when the column size is beyond $24$. Moreover, the behaviour observed with $n = 7$ also holds good with $n = 15$. Although the performance of CVX is superior, these methods cannot be deployed in practice. 

\subsection{Comparison with Bloom filter based Methods}

For comparing the performance of the NCEE protocol with the Bloom filter protocol \cite{HTDSC}, we observe that the entries of $\mathbf{A}_{E}$ must be appropriately quantized so as to keep the same provenance size for both the schemes. One such comparison is presented in Fig. \ref{fig_motivation} by plotting the error rates of the NCEE protocol along with the L-gOMP algorithm at the destination. As highlighted in the caption of Fig. \ref{fig_motivation}, the simulations are conducted for a network of $n = 15$ nodes when $h = 4$. As far as the Bloom filter based method is concerned, we observe that error rates are minimized when the number of hash functions is bounded away from one. Consequently, the processing delay at each relay node is also higher. To generate the absolute delay corresponding to the number of hash functions used at each node, we have implemented these methods on a test bed of XBee devices. With respect to the NCEE protocol, we have constructed $\mathbf{A}_{E}$ from $\{0, 1\}$, and have implemented network coding operations over $\mathbb{R}$. As a result, the number of bits used to represent every symbol in the provenance is $1$ bit, $2$ bits, $2$ bits and $3$ bits across the $1$st hop, $2$nd hop, $3$rd hop and the $4$th hop, respectively, thereby contributing an average of $2$ bits per symbol per hop. With this framework of quantization, we have chosen an appropriate value of $m$ for the NCEE protocol to ensure that the average provenance size is the same as that of the Bloom filter protocol. The plots confirm that the NCEE protocol offers delay profiles in a complementary region compared to \cite{HTDSC}, yet providing low error rates.

\section{Conclusions}

Identifying the requirements of low-latency constraints and varying network topology, we have proposed the NCEE protocol in order to recover the footprint of information flow. For the provenance recovery process of the NCEE protocol, we have proposed a family of path-aware OMP algorithms that capture the path and sparsity constraints in their solutions. Importantly, from the standpoint of novelty, the NCEE protocol caters to lower delay regimes than the Bloom filter based protocols.

\bibliographystyle{IEEEtran}
\bibliography{ref}
\end{document}